\documentclass{jac}

\usepackage{verbatim,color,mathabx,cyrillic}
\usepackage{graphicx}

\newcommand{\acco}[1]{\left\{#1\right\}}

\newcommand{\pa}[1]{\left(#1\right)}

\newcommand{\interval}[2]{\left[\hspace{-1ex}\left[\hspace{0.5ex}#1;#2 \hspace{0.5ex} \right]\hspace{-1ex}\right]}

\newcommand{\ppte}[3]{$\text{\bf Q}_{#1}^{#2}(#3)$}
\newcommand{\rec}[1]{($\mathcal{H}_{#1}$)}
\newcommand{\swap}{\bigvarstar}

\newcommand{\block}{block}

\newcommand{\voisq}{\mathcal{BN}}

\newcommand{\incl}{\subseteq}

\newcommand{\joliC}{\mathcal{C}}
\newcommand{\joliD}{\mathcal{D}}

\newcommand{\joliN}{\mathcal{N}}
\newcommand{\joliV}{\mathcal{V}}
\newcommand{\joliK}{\mathcal{K}}

\newcommand{\N}{\mathbb{N}}
\newcommand{\R}{\mathbb{R}}
\newcommand{\Pa}{\mathcal{P}}
\newcommand{\Z}{\mathbb{Z}}

\begin{document}

\title{The Block Neighborhood}

\author[lab1]{P. Arrighi}{Pablo Arrighi}
\address[lab1]{Universit\'e de Grenoble, LIG
  \\ 220 rue de la chimie
  \\ 38400 Saint-Martin-d'H\`eres, France
}  
\email{pablo.arrighi@imag.fr}
\urladdr{http://membres-lig.imag.fr/arrighi/}

\author[lab2]{V. Nesme}{Vincent Nesme}
\address[lab2]{Quantum information theory
\\ Universit\"at Potsdam
\\ Karl-Liebknecht-Str. 24/25
\\ 14476 Potsdam, Germany}
\email{vnesme@gmail.com} 
\urladdr{http://www.itp.uni-hannover.de/\~{}nesme/}

%\thanks{Merci maman.}

\keywords{cellular automata, neighborhood, quantum, block representation.}

\subjclass{F.1.1}

\begin{abstract}\noindent
We define the \block\ neighborhood of a reversible CA, which is related both to its decomposition 
into a product of block permutations and to quantum computing.  We give a purely combinatorial characterization of the \block\ neighborhood, which helps in two ways.  First, it makes the computation of the \block\ neighbourhood of a given CA relatively easy. Second, it allows us to derive upper bounds on the \block\ neighborhood: for a single CA as function of the classical and inverse neighborhoods, and for the composition of several CAs.  One consequence of that is a characterization of a class of ``elementary'' CAs that cannot be written as the composition of two simpler parts whose neighborhoods and inverse neighborhoods would be reduced by one half. 
\end{abstract}

\maketitle

\section*{Introduction}

Otherwise decent people have been known to consider reversible cellular automata (RCAs) and look for ways to decompose them into a product of reversible blocks permutations.  One big incentive for doing so is to ensure structural reversibility, as was the concern in \cite{margolus}, as it helps to design RCAs (see for instance \cite{MoritaCompUniv1D,MoritaCompUniv2D}), whereas determining from its local transition function whether a CA is reversible is undecidable \cite{undecidable}.  

Sadly, the relation is not clearly understood between both frameworks; several articles tackle this problem \cite{Kari_blocks,Kari_circuit_depth,durandlose}, whose conclusion, in a nutshell, is the following.  It is always possible, by increasing the size of the alphabet, to simulate a $d$-dimensional CA by a reversible block CA of depth at most $d+1$.  In the case of dimensions 1 and 2, up to shifts, no additional space and no coding is needed; it is still an open problem whether the same can be said in higher dimensions.

We will be here concerned with the size of the blocks, or rather, with the information on the neighborhood that is deducible purely geometrically from a block structure decomposition.

\begin{figure}[htbp]
\input{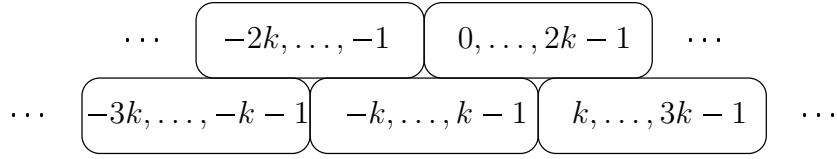}
\caption{The geometric neighborhood in a block structure.}
\label{figure_blocks}
\end{figure}

If we just know that the CA is defined by such a structure, we can deduce, for instance, that the cell 0 has an influence only on the cells $-2k,\ldots,2k-1$, which means the neighborhood of this CA has to be included in $\interval{-2k+1}{2k}$.  But it is also true that the cells $-k$ and $k-1$ influence only the cells $-2k,\ldots,2k-1$, so the translation invariance tells us more: we can deduce that the neighborhood of this CA is included in $\interval{-k}{k}$.  Another way to look at it is to modify slightly the block structure, and update cell $0$ once and for good on the first step, so that the new structure would look something like Figure~\ref{figure_blocks2}.

\begin{figure}[htbp]
\input{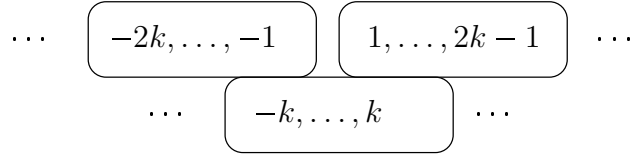}
\caption{Block structure with a tooth gap.}
\label{figure_blocks2}
\end{figure}

If we concentrate only on the central block on the first line and the fact that no block of the second line acts on cell $0$ and ask what can then be the minimal size of the central block, we get to Figure~\ref{figure_blocks3} and our definition of the \block\ neighborhood in Definition~\ref{def_semilocal}.  Section~\ref{sec_def} is devoted to the basic properties of this neighborhood, in particular Proposition~\ref{old_definition} gives an expression of it in terms of combinatorics on words.

\begin{figure}[htbp]
\input{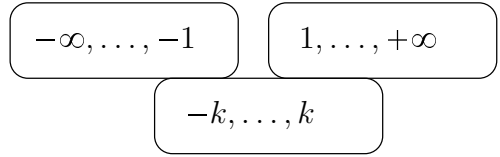}
\caption{Simplified block structure.}
\label{figure_blocks3}
\end{figure}

So, how large must this central block be? Since it does all the work updating the state of cell $0$, it should at least include the neighborhood of this cell.  But there is a dual way to look at Figure~\ref{figure_blocks2} when it is turned upside-down.  What we know see is a block decomposition of the inverse CA, where the first step updates the complement of $\interval{-k}{k}$, so we also have a condition involving the neighborhood of the inverse CA, which has little to no relation to that of the CA itself.  Hence, there is something non trivial to say about that, and these considerations will be developed in Section~\ref{sec_main}, where the two results needed for
bounding block neighborhoods are stated.  Proposition \ref{borne_individuelle} is not new --- although it is the first time that it is put in direct relation with block decomposition --- but Corollary~\ref{corollaire_ridicule} and Proposition~\ref{mainthm} are.

A last caveat: while this article is written in a purely classical perspective, everything it deals with also has to do with quantum CAs (QCAs).  The definition of QCAs we are dealing with was introduced in \cite{Schumacher} as the natural extension of the usual definition of CAs to a universe ruled by reversible quantum laws.  It is founded on the same principles that rule usual CAs: discrete space-time, translation invariance, locality; in particular, QCAs have a similar notion of neighborhood.  It was already proven in that first article that reversible CAs can be naturally embedded into a quantum setting, turning them into QCAs.  However, curiously enough, the neighborhood of these QCAs --- the quantum neighborhood --- was not shown to be equal to that of the original CAs;  rather, a nontrivial bound was given (which is to be found as Proposition~\ref{borne_individuelle} of the present article).  It was then made explicit in \cite{ANW1} that the quantum neighborhood can indeed, and typically will, be strictly larger than the original one.

The authors tried to translate into purely classical terms a definition of the quantum neighborhood of quantized reversible CAs, and found the expression of Proposition~\ref{old_definition}, before realizing the close connection to block structures.  In retrospect, the link is hardly surprising, since a construction was given in \cite{ANW3} that uses auxiliary space to write a CA in a block structure, where each block acts on exactly the quantum neighborhood --- the construction is given in the quantum case, but applies to the classical case, mutatis mutandis.  Notions such as semicausality (Definition~\ref{def_semicausal}) and semilocalizability (Definition~\ref{def_semilocal}) are also imported from the quantum world, cf.~\cite{semicausal}.

So, in good conscience, the \block\ neighborhood could be called the quantum neighborhood, but since in the final version no explicit reference to the quantum model needs to be made, the name sounded a bit silly.  Nevertheless, if other natural neighborhoods were to be defined in relation to block structures, let it be said that the neighborhood we define and study in this article will always deserve ``quantum'' as a qualifier.

\section*{Notations}

\begin{itemize}
\item $\Sigma$ is the alphabet.
\item $a.b$ denotes the concatenation of words $a$ and $b$.
\item $a|_X$ is the restriction of word $a$ on a subset of indices $X$.
\item $a=b|_X$ means that words $a$ and $b$ coincide on $X$.
\item $\bar X$ denotes the complement of $X$, usually in $\Z$.
\item For $A,B\incl \Z$, $A+B$ is their Minkowski sum $\acco{a+b \mid a\in A , b\in B}$; similarly with $A-B$.
\item $\interval{x}{y}$ is the integer interval $[x;y]\cap \Z$.
\item $fg$ denotes the composition of CAs $f$ and $g$.
\item $\swap$ denotes the operation reversing the order in a tuple : $\swap (x_1,x_2,\ldots,x_n)=(x_n,x_{n-1},\ldots,x_1)$. It acts similarly on $\Sigma^\Z$ by $\swap(a)_n=a_{-n}$.
\end{itemize}

\section{Definitions}\label{sec_def}

\begin{defi}
For a bijection $f$ whose domain and range are written as products, its dual is defined by $\tilde f = \swap f^{-1} \swap$.  This applies in particular to the case where $f$ is a CA.  In this case $\tilde f$ is the conjugation of $f$ by the central symmetry.
\end{defi}

For instance, shifts are self-dual.  Clearly, $f\mapsto \tilde f$ is an involution.  In the remainder of this article, each time a notion (like a function or a property) is defined in term of a CA $f$, its dual, denoted by adding a tilde, is defined in the same way in term of $\tilde f$.

\begin{defi}
The (classic) neighborhood $\joliN(f)$ is the smallest subset $A$ of $\Z$ such that $v|_A$ determines $f(v)|_0$.
\end{defi}

The dual neighborhood $\tilde\joliN$ is thus defined by $\tilde \joliN(f)=\joliN(\tilde f)$.  The following two definitions are imported from \cite{semicausal}, where they are shown to be equivalent in the quantum case.

\begin{defi}\label{def_semicausal}
A function $f:A\times B\to C\times D$ is semicausal if its projection $C$ depends only on $A$, i.e. if there exists $g:A\to C$ such that $f(a,b)_C=g(a)$.
\end{defi}

\begin{defi}\label{def_semilocal}
A bijection $f:A\times B\to C\times D$ is (reversibly) semilocalizable if there exists a seevit $E$ and bijections $g:A\to C\times E$ and $h:D \to B\times E$ such that $f(a,b)= (g_C(a),\tilde h(g_E(a),b))$, as illustrated in Figure~\ref{semilocalization}.
\end{defi}

\begin{figure}[htbp]
\centering
\includegraphics[scale=.18]{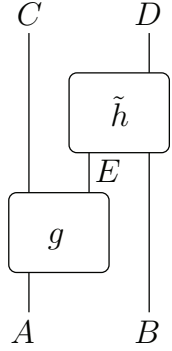}%\input{semilocalization.pdf_t}
\caption{a semilocalizable bijection}
\label{semilocalization}
\end{figure}

Semilocalizability is, as the name suggests, an asymmetric property, in the sense that applying a symmetry on Figure~\ref{semilocalization} along a vertical axis, i.e. swapping $A$ with $B$ and $C$ with $D$, breaks the semilocalizability.  However, a transformation that preserves the property is the central symmetry, which corresponds to taking the dual of $f$: the notion of semilocalizability is self-dual.

\begin{prop}\label{old_definition}
$f:A\times B\to C\times D$ is semilocalizable if and only if the three following conditions are met:
\begin{enumerate}
\item $f$ is semicausal;
\item $\tilde f$ is semicausal;
\item for every $a,a'\in A$ and $b,b'\in B$, if $f(a,b)_{D}=f(a',b)_{D}$ then $f(a,b')_{D}=f(a',b')_{D}$.
\end{enumerate}
\end{prop}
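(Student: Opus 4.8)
The plan is to prove the two implications separately; the forward one is routine, while the converse requires building the seevit by hand.

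\textbf{($\Rightarrow$)} Suppose $f$ is semilocalizable, so that $f(a,b)=(g_C(a),\tilde h(g_E(a),b))$ for bijections $g:A\to C\times E$ and $h:D\to B\times E$. Condition~(1) is immediate, since $f(a,b)_C=g_C(a)$ is independent of $b$. Condition~(2) follows from the self-duality of semilocalizability already noted: $\tilde f$ is again semilocalizable, hence semicausal by the same reasoning. For~(3), observe that $f(a,b)_D=\tilde h(g_E(a),b)=h^{-1}(b,g_E(a))$; if this equals $f(a',b)_D=h^{-1}(b,g_E(a'))$, then injectivity of $h^{-1}$ gives $g_E(a)=g_E(a')$, and therefore $f(a,b')_D=h^{-1}(b',g_E(a))=f(a',b')_D$ for every $b'$.

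\textbf{($\Leftarrow$)} This is the substantial direction. First I would unpack the hypotheses. Condition~(1) provides $g_C:A\to C$ with $f(a,b)_C=g_C(a)$. Condition~(2), applied to $\tilde f:D\times C\to B\times A$, says precisely that $f^{-1}(c,d)_B$ does not depend on $c$, giving a map $\phi:D\to B$ with $f^{-1}(c,d)_B=\phi(d)$; its key consequence is the implication $f(a,b)_D=d\ \Rightarrow\ b=\phi(d)$, obtained by setting $c=g_C(a)$ so that $f(a,b)=(c,d)$ and hence $b=f^{-1}(c,d)_B=\phi(d)$. Next I would use~(3) to manufacture the seevit: for fixed $b$ the relation $a\,R_b\,a'\iff f(a,b)_D=f(a',b)_D$ is an equivalence relation (the kernel of $a\mapsto f(a,b)_D$), and~(3) asserts that all the $R_b$ coincide; call the common relation $\sim$, put $E=A/{\sim}$ and $g_E(a)=[a]$. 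Since members of a class share their entire $D$-behaviour, each class $e$ determines a map $F_e:b\mapsto f(a,b)_D$ (for $a\in e$ arbitrary), which is injective because $f$ is and because $f(a,b)$ and $f(a,b')$ already agree on their $C$-coordinate.

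It then remains to verify that $g=(g_C,g_E):A\to C\times E$ and the map $(b,e)\mapsto F_e(b)$ from $B\times E$ to $D$ are bijections, for then the inverse of the latter is the desired $h$ and a direct check gives $f(a,b)=(g_C(a),\tilde h(g_E(a),b))$. I expect the main obstacle to lie exactly here, and this is where hypothesis~(2) does its work: to see that the images $F_e(B)$ are pairwise disjoint, suppose $d\in F_e(B)\cap F_{e'}(B)$ and pick $a\in e$, $a'\in e'$, $b$, $b'$ with $f(a,b)_D=d=f(a',b')_D$; the implication above forces $b=\phi(d)=b'$, so $f(a,b)_D=f(a',b)_D$, whence $a\sim a'$ and $e=e'$. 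Coverage of $D$ follows from surjectivity of $f$. Disjointness both makes $(b,e)\mapsto F_e(b)$ a bijection onto $D$ and identifies $f^{-1}(C\times F_e(B))=e\times B$, so that $f$ restricts to a bijection $e\times B\to C\times F_e(B)$; comparing this image with $g_C(e)\times F_e(B)$ shows $g_C$ maps each class onto all of $C$, which yields surjectivity of $g$, while injectivity of $g$ is immediate from injectivity of $f$.
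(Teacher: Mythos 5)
Your proof is correct. The forward direction coincides with the paper's argument. In the converse direction, both proofs rest on the same central observation: condition~(3) forces the kernels of the maps $a\mapsto f(a,b)_D$ to be independent of $b$, producing a single equivalence relation on $A$ whose quotient is taken as the seevit $E$, with $g(a)=(f(a,b)_C,[a])$. Where you genuinely diverge is in how $h$ is obtained. The paper proceeds symmetrically: it also quotients $D$ by the dual relation $\sim_D$, sets $h(d)=(\tilde f(d,c)_B,[d])$, and then reduces the whole matter to checking that $[a]\mapsto[f(a,b)_D]$ is a well-defined bijection from $A/{\sim_A}$ to $D/{\sim_D}$ whose inverse is its dual. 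You instead work entirely from the $A$-side, defining $h^{-1}$ directly as $(b,e)\mapsto F_e(b)$ and proving bijectivity by showing the images $F_e(B)$ partition $D$; this is where condition~(2) enters your argument explicitly, via the map $\phi:D\to B$ with $f^{-1}(c,d)_B=\phi(d)$ and the consequence $f(a,b)_D=d\Rightarrow b=\phi(d)$. Your route is longer but more explicit: it verifies that $g$ and $h$ are actually bijections and that the factorization $f(a,b)=(g_C(a),\tilde h(g_E(a),b))$ holds --- points the paper leaves implicit --- and it isolates precisely what the semicausality of $\tilde f$ contributes (disjointness of the sets $F_e(B)$). The paper's route is shorter and manifestly self-dual, getting $h$ for free by duality at the cost of having to reconcile the two quotients. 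Both arguments are sound, up to the degenerate case $B=\emptyset$ that both treatments rightly dismiss.
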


\begin{proof}
Suppose $f$ is semilocalizable.  Then obviously from Figure~\ref{semilocalization} both $f$ and $\tilde f$ are semicausal. Let $a,a'\in A$ and $b,b'\in B$ such that $f(a,b)_{D}=f(a',b)_{D}$.  That means $h^{-1}(b,g_E(a))=h^{-1}(b,g_E(a'))$ therefore $g_E(a)=g_E(a')$, and it follows immediately $f(a,b')_{D}=f(a',b')_{D}$.

Suppose now conditions (1), (2), (3) are met.  Let $\sim_A$ be the binary relation on $A$ defined by $a\sim_A a'$ iff $\forall b\in B\;f(a,b)=f(a',b)$.  Note that because of (3) this is equivalent to $\exists b\in B\;f(a,b)=f(a',b)$ (except if $B=\emptyset$, which is too trivial a case to worry about), from which we deduce 
\begin{equation}
\forall c,c'\in C\;\forall d\in D\quad \tilde f(d,c)\sim_A \tilde f(d,c') \label{existent}
\end{equation}

  It is clearly an equivalence relation, so using the fact that $f$ is semicausal we can define $g:A\to C\times (A/\sim_A)$ by $g(a)=(f(a,b)_C, [a])$, where $b$ is an arbitrary element of $B$ and $[a]$ is the class of $a$ in $A/\sim_A$.  One can define dually $\sim_D$ on $D$ and define $h:D\to B\times (D/\sim_D)$ by $h(d)=(\tilde f(d,c)_B, [d])$. 

It remains to be proven that $\alpha:\pa{\begin{array}{rcl} A/{\sim_A} & \to & D/{\sim_D} \\ \left[a\right] & \mapsto & \left[f(a,b)_D\right] \end{array}}$ is a well-defined bijection.  To prove that it is well-defined, we need to show that for every $a,a'\in A$ such that $a\sim_A a'$ and every $b,b'\in B$, $f(a,b)_D\sim_D f(a',b')_D$, which is easily done in two small steps.  First, by definition of $\sim_A$, $f(a,b)_D=f(a',b)_D$.  Then, by the dual of (\ref{existent}), $f(a',b)_D\sim_D f(a',b')$.  We now prove that $\alpha$ is bijection by showing that its inverse is its dual, defined by $\tilde\alpha([d]) = [\tilde f(d,c)_A]$, so that $\tilde\alpha\alpha ([a])=[\tilde f(f(a,b)_D,c)_A]$.  Since this value is independent of $c$, we can try in particular with $c=f(a,b)_C$, where it is clear that we get $[a]$.
\end{proof}

\begin{defi}
For a CA $f$ on the alphabet $\Sigma$ and $X,Y$ two subsets of $\Z$, let \ppte{X}{Y}{f} be the property: ``$f$ seen as a function from $\Sigma^X\times \Sigma^{\bar X}$ to $\Sigma^Y\times \Sigma^{\bar Y}$ is semilocalizable''.
\end{defi}

Some property are obvious from the definition of semilocalizability, especially from Figure~\ref{semilocalization}.  Let us give two basic examples.

\begin{lemma}\label{welldefined}
If \ppte{X}{Y}{f} holds, then so does \ppte{X'}{Y'}{f} for every $X'\supseteq X$ and $Y'\subseteq Y$.
\end{lemma}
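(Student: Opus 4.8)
The plan is to establish the two monotonicity directions separately, as the hypotheses $X'\supseteq X$ and $Y'\subseteq Y$ bear independently on the source and the target factorizations. Write $A=\Sigma^X$, $B=\Sigma^{\bar X}$, $C=\Sigma^Y$, $D=\Sigma^{\bar Y}$ for the blocks of the decomposition witnessing \ppte{X}{Y}{f}, and set $S=X'\setminus X\incl\bar X$ and $R=Y\setminus Y'\incl Y$. Then $\Sigma^{X'}=A\times\Sigma^S$ and $B=\Sigma^S\times\Sigma^{\bar{X'}}$, while $C=\Sigma^{Y'}\times\Sigma^R$ and $\Sigma^{\bar{Y'}}=\Sigma^R\times D$. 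So enlarging $X$ transfers the factor $\Sigma^S$ from $B$ into the first source block, and shrinking $Y$ transfers $\Sigma^R$ from $C$ into the second target block. I would argue directly from Definition~\ref{def_semilocal} (i.e. from Figure~\ref{semilocalization}) rather than through Proposition~\ref{old_definition}, since enlarging $X$ merges coordinates previously split across the two source blocks, which makes condition (3) awkward to check head-on.

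For the first direction, suppose \ppte{X}{Y}{f} holds, witnessed by a seevit $E$ and bijections $g,h$ with $f(a,b)=(g_C(a),\tilde h(g_E(a),b))$. The key idea is to let the seevit carry the transferred coordinates across the circuit. I would take $E'=E\times\Sigma^S$ and, writing a source point as $(a,s)\in A\times\Sigma^S=\Sigma^{X'}$ with remaining input $\beta\in\Sigma^{\bar{X'}}$ (so the old $b\in B$ is $(s,\beta)$), set $g'(a,s)=\bigl(g_C(a),(g_E(a),s)\bigr)$ and $\tilde h'\bigl((e,s),\beta\bigr)=\tilde h\bigl(e,(s,\beta)\bigr)$. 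A one-line check gives $f\bigl((a,s),\beta\bigr)=\bigl(g'_C(a,s),\tilde h'(g'_{E'}(a,s),\beta)\bigr)$, which is exactly the form demanded by Definition~\ref{def_semilocal}, so \ppte{X'}{Y}{f} holds. The only point requiring care is that $\tilde h'$ really is the dual of a bijection $h':\Sigma^{\bar Y}\to\Sigma^{\bar{X'}}\times E'$; but $\tilde h'$ arises from the bijection $\tilde h$ by a mere reassociation of $E\times\Sigma^S\times\Sigma^{\bar{X'}}$, hence is itself a bijection, and $h'$ is simply its dual.

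For the second direction, shrinking $Y$ to $Y'$ is the mirror situation: the factor $\Sigma^R$ leaves the output block $C$ and joins $\Sigma^{\bar{Y'}}=\Sigma^R\times D$. Since the $\Sigma^R$-part of the output is produced by $g_C$ and hence depends only on the first source block, I would again absorb it into an enlarged seevit $E''=E\times\Sigma^R$, routing the $\Sigma^R$-component of $g_C(a)$ through the seevit and re-emitting it into the enlarged target block $\Sigma^{\bar{Y'}}$ via a reassociated $\tilde h''$. This is in fact the exact dual, under the central symmetry $\swap$, of the first direction, so by the self-duality of semilocalizability recorded after Definition~\ref{def_semilocal} it could instead be deduced formally from it. Composing the two directions --- first passing from $(X,Y)$ to $(X',Y)$, then from $(X',Y)$ to $(X',Y')$ --- yields \ppte{X'}{Y'}{f}.

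The step I expect to be the only real (if modest) obstacle is the bookkeeping: tracking which coordinates have been relocated and verifying at each stage that the reshuffled map stays of the constrained shape ``$\tilde h$ for a genuine bijection $h$'' rather than degenerating into an arbitrary bijection. No idea is needed beyond ``stash the transferred coordinates in the seevit and carry them across the circuit,'' which is precisely why the statement is advertised as obvious from the figure.
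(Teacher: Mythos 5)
Your construction is correct and is exactly the formalization of what the paper treats as immediate: the paper states Lemma~\ref{welldefined} without proof, remarking only that it is ``obvious from the definition of semilocalizability, especially from Figure~\ref{semilocalization}'', and your argument --- enlarging the seevit to $E\times\Sigma^S$ (resp.\ $E\times\Sigma^R$) and routing the transferred coordinates through it, with the second direction obtainable from the first by self-duality --- is precisely that figure-based argument written out. Nothing to object to.
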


\begin{itemize}
\item For a CA seen as a function from $\Sigma^X\times \Sigma^{\bar X}$ to $\Sigma^Y\times \Sigma^{\bar Y}$, being semicausal means $X\supseteq Y+\joliN(f)$; the semicausality of $\tilde f$ means $X\supseteq Y + \tilde\joliN(f)$.
\end{itemize}

The following property, however, is easier to prove with Proposition~\ref{old_definition} in mind.

\begin{lemma}\label{intersection}
If \ppte{X}{Y}{f} and \ppte{X'}{Y}{f}, then \ppte{X\cap X'}{Y}{f}.
\end{lemma}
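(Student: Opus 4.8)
The plan is to verify the three conditions of Proposition~\ref{old_definition} for $f$ read through the decomposition associated with $X\cap X'$, bootstrapping from the fact that they already hold for the decompositions associated with $X$ and with $X'$. To keep the bookkeeping straight I would partition $\Z = Z \sqcup P \sqcup Q \sqcup R$ with $Z = X\cap X'$, $P = X\setminus X'$, $Q = X'\setminus X$ and $R = \overline{X\cup X'}$, so that $X = Z\sqcup P$, $X' = Z\sqcup Q$, $\bar X = Q\sqcup R$, $\overline{X'} = P\sqcup R$ and $\overline{X\cap X'} = P\sqcup Q\sqcup R$. A configuration is then written as a tuple $(z,p,q,r)$ recording its restrictions to the four zones, and I write ``$f(\cdot)$ and $f(\cdot)$ agree on $\bar Y$'' for equality of their $\bar Y$-restrictions. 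Conditions (1) and (2) are then immediate from the semicausality characterization recalled before the statement: \ppte{X}{Y}{f} and \ppte{X'}{Y}{f} give $X\supseteq Y+\joliN(f)$ and $X'\supseteq Y+\joliN(f)$, whence $X\cap X'\supseteq Y+\joliN(f)$, which is exactly semicausality of $f$ for the $X\cap X'$-decomposition; the identical argument with $\tilde\joliN(f)$ settles condition (2).

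The heart of the matter is condition (3) for $X\cap X'$, which I would obtain by chaining the two instances of condition (3) already available. Suppose $f(z,p,q,r)$ and $f(z',p,q,r)$ agree on $\bar Y$ (same $p,q,r$, differing only on $Z$); the goal is the same agreement after replacing $(p,q,r)$ by an arbitrary $(p',q',r')$. First I apply condition (3) for the $X$-decomposition: the two configurations have $X$-parts $(z,p)$ and $(z',p)$ and share the $\bar X$-part $(q,r)$, so the agreement on $\bar Y$ propagates to every $\bar X$-value, in particular to $(q',r')$, yielding that $f(z,p,q',r')$ and $f(z',p,q',r')$ agree on $\bar Y$. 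Then I apply condition (3) for the $X'$-decomposition to this new equality: the two configurations now have $X'$-parts $(z,q')$ and $(z',q')$ and share the $\overline{X'}$-part $(p,r')$, so the agreement propagates to every $\overline{X'}$-value, in particular to $(p',r')$, giving that $f(z,p',q',r')$ and $f(z',p',q',r')$ agree on $\bar Y$ --- which is precisely condition (3) for $X\cap X'$. Since conditions (1)--(3) now all hold, Proposition~\ref{old_definition} yields \ppte{X\cap X'}{Y}{f}.

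The only delicate point --- and the step most prone to a slip --- is the zone bookkeeping that guarantees each invocation of condition (3) holds fixed exactly the coordinates it must and frees only those it is entitled to: the $X$-step must leave the $P$-coordinate untouched while sweeping $\bar X = Q\sqcup R$, and the $X'$-step must leave the freshly updated $R$-coordinate consistent while sweeping $\overline{X'} = P\sqcup R$. Once the splitting $(z,p,q,r)$ is fixed, the two applications dovetail with no leftover obligation, so beyond this combinatorial alignment there is no genuine analytic obstacle; this is why, as remarked before the statement, the lemma is far easier to see through the combinatorial reformulation of Proposition~\ref{old_definition} than directly from Figure~\ref{semilocalization}.
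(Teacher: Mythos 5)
Your proof is correct and follows essentially the same route as the paper's: the key step is identical, namely passing through the hybrid configuration that keeps the $X\setminus X'$ coordinates and already swaps in the new values elsewhere (your $(z,p,q',r')$ is exactly the paper's $a.u'$), then applying condition~(3) for $X$ and for $X'$ in turn. The only difference is that you also spell out conditions (1) and (2) via the semicausality inclusions, which the paper leaves implicit.
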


\begin{proof}
Let $a,b$ be words on $X\cap X'$, and $u,v$ words on $\overline{X\cap X'}$, and suppose $f(a.u)=f(b.u)|_{\bar Y}$.  Let $u'$ be the word on $\overline{X\cap X'}$ that is equal to $u$ on $X\setminus X'$, $v$ elsewhere.  According to \ppte{X}{Y}{f}, $f(a.u')=f(b.u')|_{\bar Y}$; we then conclude from \ppte{X'}{Y}{f} that $f(a.v)=f(b.v)|_{\bar Y}$.
\end{proof}

Not that we get immediately the following corollary from the selfduality of semilocalizability: if \ppte{X}{Y}{f} and \ppte{X}{Y'}{f}, then \ppte{X}{Y\cup Y'}{f}.

We have now established all the properties on \ppte{X}{Y}{f} required to define the \block\ neighborhood.

\begin{defi}\label{carac_quantic}
The \block\ neighborhood $\voisq(f)$ of $f$ is the smallest $X$ such that \ppte{X}{\acco{0}}{f} holds.
\end{defi}

The word ``neighborhood'' is not gratuitous.  In fact, \ppte{X}{Y}{f} behaves exactly like ``$X$ includes the neighborhood of $Y$ for some CA $f'$ such that $\joliN(f')=\voisq(f)$'', as stated in the next lemma.  The idea is that $\voisq$ characterizes a notion of dependency that is very similar to the usual one characterized by $\joliN$.

\begin{lemma}\label{xyz}
\ppte{X}{Y}{f} is equivalent to $X\supseteq Y+\voisq(f)$.
\end{lemma}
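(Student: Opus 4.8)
The plan is to write $N=\voisq(f)$ and to prove the two implications separately, relying on the translation invariance of $f$ (so that \ppte{X}{Y}{f} holds iff \ppte{X+t}{Y+t}{f} holds for all $t\in\Z$), on Lemma~\ref{welldefined}, on Lemma~\ref{intersection}, and on the union corollary stated just after it. First note that $N$ is well defined: by Lemma~\ref{intersection} the family of $X$ with \ppte{X}{\acco{0}}{f} is closed under intersection, in fact under arbitrary intersection as I argue at the end, so it has a smallest element.

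The forward implication is the easy one. Assume \ppte{X}{Y}{f} and fix $y\in Y$. By Lemma~\ref{welldefined} with target $\acco{y}\incl Y$ we get \ppte{X}{\acco{y}}{f}, and translating by $-y$ gives \ppte{X-y}{\acco{0}}{f}. Minimality of $N$ then forces $N\incl X-y$, i.e. $y+N\incl X$; taking the union over $y\in Y$ gives $Y+N\incl X$.

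For the converse, assume $X\supseteq Y+N$. For each $y\in Y$, translating \ppte{N}{\acco{0}}{f} gives \ppte{N+y}{\acco{y}}{f}, and since $y+N\incl Y+N\incl X$, Lemma~\ref{welldefined} promotes this to \ppte{X}{\acco{y}}{f}. It remains to glue all of these into \ppte{X}{Y}{f}. The two semicausality conditions of Proposition~\ref{old_definition} are immediate for any $Y$, since \ppte{N}{\acco{0}}{f} already yields $N\supseteq\joliN(f)$ and $N\supseteq\tilde\joliN(f)$, whence $X\supseteq Y+N\supseteq Y+\joliN(f)$ and likewise for $\tilde\joliN(f)$. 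So the whole content of the gluing is the third condition, i.e. the union corollary stated just after Lemma~\ref{intersection}, applied to the family $\pa{\acco{y}}_{y\in Y}$.

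The main obstacle is precisely here: that corollary, being the dual of Lemma~\ref{intersection}, is stated only for finitely many sets, whereas $Y$ may be infinite. I expect to remove this restriction using the finiteness of $\joliN(f)$. Concretely, I would first upgrade Lemma~\ref{intersection} to an arbitrary intersection $\bigcap_i X_i$: to check its condition (3) one must transport the equality $f(a.u)_z=f(b.u)_z$ from one completion to another for a single output cell $z$, and since that cell only reads the finite window $z+\joliN(f)$, only finitely many cells lie outside $\bigcap_i X_i$, hence only finitely many indices $i_1,\dots,i_m$ with $\pa{z+\joliN(f)}\cap\overline{\bigcap_i X_i}\incl\overline{X_{i_1}\cap\dots\cap X_{i_m}}$ are relevant; applying the finite Lemma~\ref{intersection} to those and patching on the window closes the step. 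By self-duality of semilocalizability this arbitrary-intersection statement, applied to $\tilde f$ (whose neighborhood $\tilde\joliN(f)$ is again finite), is exactly the arbitrary-union corollary needed above, and it also justifies the well-definedness of $N$ claimed at the outset.
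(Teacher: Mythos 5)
Your proof is correct and follows essentially the same route as the paper's: the forward direction via Lemma~\ref{welldefined}, translation invariance and minimality of $\voisq(f)$, and the converse by translating \ppte{\voisq(f)}{\acco{0}}{f} to each $\acco{y}$ and gluing with the union corollary dual to Lemma~\ref{intersection}. The only difference is that you explicitly justify passing from the two-set union corollary to an arbitrary union over $Y$ (and the well-definedness of $\voisq$) by a finite-window patching argument using the finiteness of $\joliN(f)$ and $\tilde\joliN(f)$ --- a legitimate refinement of a step the paper elides by simply ``invoking the dual of Lemma~\ref{intersection}''.
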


\begin{proof}
Suppose $X\supseteq Y+\voisq(f)$.  By translation invariance, for all $y\in Y$, we have \ppte{\acco{y}+\voisq(f)}{\acco{y}}{f}, which, according to Lemma~\ref{welldefined}, implies \ppte{X}{\acco{y}}{f}.  Invoking now the dual of Lemma~\ref{intersection}, we get \ppte{X}{Y}{f}.

For the reciprocal, suppose now \ppte{X}{Y}{f}.  According to Lemma~\ref{welldefined}, we have, for every $y\in Y$, \ppte{X}{\acco{y}}{f}; but that is, by definition of $\voisq$, equivalent to $X\supseteq \acco{y}+\voisq(f)$, so we must have $X\supseteq Y+\voisq(f)$.
\end{proof}

So, the \block\ neighborhood is just one kind of neighborhood.  However, $\voisq$ has by contruction one property that $\joliN$ does not share: it is self-dual.  It is not enough to make interesting, and many questions are left open at this point.  We just know $\voisq\supseteq\joliN\cup\tilde\joliN$, but do we have a lower bound on $\voisq$? Is it always finite? The answer is in Corollary~\ref{corollaire_ridicule}. How do the \block\ neighborhoods compose? It can be easily inferred from Figure~\ref{semilocalization} that $\voisq(gf)\incl \voisq(g)+\voisq(f)$, which is certainly good news, but Proposition~\ref{mainthm} provides much more interesting bounds.

\section{Main theorem}\label{sec_main}

\begin{proposition}\label{borne_individuelle}
$\voisq\incl\joliN - \joliN + \tilde \joliN$.
\end{proposition}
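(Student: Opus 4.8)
The plan is to exhibit the set $X=\joliN(f)-\joliN(f)+\tilde\joliN(f)$ as a witness for the block neighborhood: I will show that \ppte{X}{\acco{0}}{f} holds, so that by the minimality in Definition~\ref{carac_quantic} we get $\voisq(f)\incl X$, which is exactly the claim. To check \ppte{X}{\acco{0}}{f}, I will verify the three conditions of Proposition~\ref{old_definition} for the decomposition $f:\Sigma^X\times\Sigma^{\bar X}\to\Sigma^{\acco{0}}\times\Sigma^{\overline{\acco{0}}}$. Throughout I will use the elementary identity $\tilde\joliN(f)=-\joliN(f^{-1})$, which is immediate from $\tilde f=\swap f^{-1}\swap$ together with $\swap$ being the reflection.

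Conditions (1) and (2) are the two semicausalities, which by the remark following Lemma~\ref{welldefined} amount to $X\supseteq\joliN(f)$ and $X\supseteq\tilde\joliN(f)$. The second is trivial, since $0\in\joliN(f)-\joliN(f)$ forces $\tilde\joliN(f)\incl(\joliN(f)-\joliN(f))+\tilde\joliN(f)=X$. For the first I need $\joliN(f)\incl X$, and the key observation is that $\joliN(f)\cap\tilde\joliN(f)\neq\emptyset$. This I would obtain from subadditivity of the classical neighborhood under composition: since $f^{-1}f=\id$ and $\joliN(\id)=\acco{0}$, we have $\acco{0}=\joliN(f^{-1}f)\incl\joliN(f^{-1})+\joliN(f)$, so there are $p\in\joliN(f^{-1})$ and $q\in\joliN(f)$ with $p+q=0$; then $q\in\joliN(f)$ and $q=-p\in-\joliN(f^{-1})=\tilde\joliN(f)$. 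Fixing such a $q$, any $m\in\joliN(f)$ is $m=(m-q)+q\in(\joliN(f)-\joliN(f))+\tilde\joliN(f)=X$, whence $\joliN(f)\incl X$.

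The heart of the proof is condition (3): given $a,a'\in\Sigma^X$ and $b,b'\in\Sigma^{\bar X}$ with $f(a.b)=f(a'.b)|_{\overline{\acco{0}}}$, I must show $f(a.b')=f(a'.b')|_{\overline{\acco{0}}}$. First I would localize the inputs. The two outputs $f(a.b)$ and $f(a'.b)$ agree off $0$, so applying $f^{-1}$ and reading off its neighborhood, the preimages $a.b$ and $a'.b$ can differ only on $-\joliN(f^{-1})=\tilde\joliN(f)$; as they already coincide on $\bar X$, the discrepancy set $S:=\acco{n : a_n\neq a'_n}$ satisfies $S\incl\tilde\joliN(f)$. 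Next I localize the outputs: since $a.b'$ and $a'.b'$ differ only on $S$, the cells where $f(a.b')$ and $f(a'.b')$ can disagree lie in $S-\joliN(f)\incl\tilde\joliN(f)-\joliN(f)$. The punch line is a window computation: for any such $n\in\tilde\joliN(f)-\joliN(f)$ the whole dependency window satisfies $n+\joliN(f)\incl\tilde\joliN(f)-\joliN(f)+\joliN(f)=X$, so $f(\cdot)_n$ reads only the $X$-part of its argument; hence $f(a.b')_n=f(a.b)_n$ and $f(a'.b')_n=f(a'.b)_n$, the choice of $b$ versus $b'$ being invisible. For $n\neq 0$ the hypothesis gives $f(a.b)_n=f(a'.b)_n$, and chaining the three equalities yields $f(a.b')_n=f(a'.b')_n$; for $n$ outside $\tilde\joliN(f)-\joliN(f)$ the two outputs agree trivially. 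This establishes condition (3).

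I expect the main obstacle to be condition (3), and within it the interplay of the two localizations: one must first squeeze the \emph{input} discrepancy into $\tilde\joliN(f)$ by way of the inverse CA, and only then does the \emph{output} discrepancy land in a region whose dependency windows are engulfed by $X$. This is precisely what forces the shape $\joliN-\joliN+\tilde\joliN$ rather than a naive $\joliN-\joliN+\joliN$. The semicausality conditions, by contrast, are routine once the nonemptiness of $\joliN(f)\cap\tilde\joliN(f)$ has been noticed.
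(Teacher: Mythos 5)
Your proof is correct and follows essentially the same route as the paper's: apply $f^{-1}$ to confine the input discrepancy to $\tilde\joliN(f)$, push it forward to confine the output discrepancy to $\tilde\joliN(f)-\joliN(f)$, and observe that on that region $f$ reads only the $\joliN-\joliN+\tilde\joliN$ part of its argument. You are slightly more explicit than the paper in checking the two semicausality conditions and the nonemptiness of $\joliN(f)\cap\tilde\joliN(f)$, but the core argument is identical.
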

Sanity check: $\joliN - \joliN + \tilde \joliN$ contains indeed $\joliN$ because $\joliN \cap \tilde \joliN\neq \emptyset$, which follows from $\acco{0}= \joliN(f f^{-1}) \incl \joliN(f)-\tilde\joliN(f)$.

\begin{proof}
The proof of this proposition can be essentially found in \cite{Schumacher}, where the result is stated as Lemma~4, albeit in a foreign formalism.  Another avatar of the proposition and its proof can be found in the form of Lemma~3.2 of \cite{JAC}.  In order to keep this article self-contained, we give yet another proof.

Let us then prove \ppte{\joliN(f) - \joliN(f) + \tilde \joliN (f)}{\acco{0}}{f}.  Let $a,b$ be words on $\joliN(f) - \joliN(f) + \tilde \joliN (f)$ and $u,v$ words on its complement such that $f(a.u)=f(b.u)|_{\overline{\acco{0}}}$.   Applying $f^{-1}$ to that equality we get $a.u=b.u|_{\overline{\tilde\joliN(f)}}$, which implies of course $a.v=b.v|_{\overline{\tilde\joliN(f)}}$, from which we obtain $f(a.v)=f(b.v)|_{\overline{-\joliN(f)+\tilde\joliN(f)}}$.  On the other hand, $f(w)|_{-\joliN(f)+\tilde\joliN(f)}$ is a function of $w|_{\joliN(f) - \joliN(f) + \tilde \joliN (f)}$, so $f(a.u)=f(a.v)|_{-\joliN(f)+\tilde\joliN(f)}$ and $f(b.u)=f(b.v)|_{-\joliN(f)+\tilde\joliN(f)}$, which in the end proves $f(a.v)=f(b.v)|_{\overline{\acco{0}}}$.
\end{proof}

Rather surprisingly, this bound is not self-dual, which allows us to reinforce it immediately.

\begin{corollary}\label{corollaire_ridicule}
$\voisq\incl(\joliN - \joliN + \tilde \joliN) \cap (\tilde\joliN - \tilde\joliN + \joliN)$.
\end{corollary}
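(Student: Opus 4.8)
The plan is to exploit the self-duality of $\voisq$ together with the fact, just emphasized, that the bound of Proposition~\ref{borne_individuelle} is \emph{not} self-dual: applying the same bound to $f$ and to its dual $\tilde f$ produces two genuinely different inclusions, and intersecting them gives the corollary. Since Proposition~\ref{borne_individuelle} holds for an arbitrary CA, I would first simply instantiate it at $\tilde f$ in place of $f$, obtaining $\voisq(\tilde f)\incl \joliN(\tilde f) - \joliN(\tilde f) + \tilde\joliN(\tilde f)$.

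Next I would rewrite the right-hand side purely in terms of $\joliN(f)$ and $\tilde\joliN(f)$. This is the only step requiring a little care, and it is pure tilde bookkeeping: by the definition of the dual neighborhood, $\joliN(\tilde f)=\tilde\joliN(f)$, and since $f\mapsto \tilde f$ is an involution, $\tilde\joliN(\tilde f)=\joliN(\tilde{\tilde f})=\joliN(f)$. Substituting these two identities converts the inclusion into $\voisq(\tilde f)\incl \tilde\joliN(f) - \tilde\joliN(f) + \joliN(f)$.

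Then I would invoke the self-duality $\voisq(\tilde f)=\voisq(f)$, established earlier by construction, to transport this back to $f$, yielding the second bound $\voisq\incl \tilde\joliN - \tilde\joliN + \joliN$. Intersecting it with the original bound $\voisq\incl \joliN - \joliN + \tilde\joliN$ from Proposition~\ref{borne_individuelle} gives exactly the claimed inclusion $\voisq\incl(\joliN - \joliN + \tilde\joliN)\cap(\tilde\joliN - \tilde\joliN + \joliN)$.

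The whole argument is essentially a one-line dualization, so there is no real computational or conceptual obstacle; the only place where a slip is possible is the identity $\tilde\joliN(\tilde f)=\joliN(f)$, which rests solely on the involutivity of the duality operation rather than on any deeper structure. Everything else is formal, which is precisely why the corollary follows \emph{immediately} from the proposition.
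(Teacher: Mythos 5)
Your proof is correct and is exactly the argument the paper intends: the corollary is stated with no separate proof precisely because it follows ``immediately'' by applying Proposition~\ref{borne_individuelle} to $\tilde f$, using $\joliN(\tilde f)=\tilde\joliN(f)$, $\tilde\joliN(\tilde f)=\joliN(f)$ (involutivity) and the self-duality $\voisq(\tilde f)=\voisq(f)$, then intersecting with the original bound. Your tilde bookkeeping is the right place to be careful, and it checks out.
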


\begin{proposition}\label{mainthm}
Let $f_1,\ldots f_n$ be reversible CAs. Then
$$\voisq(f_n\cdots f_1)\incl \bigcup\limits_{k=1}^n \pa{\tilde \joliN(f_n\cdots f_{k+1}) + \voisq(f_k) + \joliN(f_{k-1} \cdots f_1)}.$$
\end{proposition}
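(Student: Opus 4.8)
The plan is to reestablish the inclusion through its semilocalization meaning. Writing $X$ for the right-hand side and $F:=f_n\cdots f_1$, by Definition~\ref{carac_quantic} together with Lemma~\ref{xyz} it suffices to prove \ppte{X}{\acco{0}}{F}. I will abbreviate $g_k:=f_n\cdots f_{k+1}$ and $h_k:=f_{k-1}\cdots f_1$ (so $g_n=h_1=\id$), so that the $k$-th term of $X$ is exactly $\tilde\joliN(g_k)+\voisq(f_k)+\joliN(h_k)$ and $F=g_kf_kh_k$. The whole proof then consists in checking the three conditions of Proposition~\ref{old_definition}. Note that one cannot simply iterate a two-factor version of the bound: that would replace $\joliN(h_k)$ by $\joliN(f_{k-1})+\cdots+\joliN(f_1)$, which is in general strictly larger, so a genuinely $n$-fold argument is required.

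Conditions (1) and (2) are cheap and use only one term each. For the semicausality of $F$ I only need $\joliN(F)\incl X$, and this is delivered by the term $k=n$: since $\joliN\incl\voisq$ and $\joliN(F)\incl\joliN(f_n)+\joliN(h_n)$, one gets $\joliN(F)\incl\voisq(f_n)+\joliN(h_n)$, which is that term. The semicausality of $\tilde F$ follows symmetrically from the term $k=1$, using $\widetilde{f_n\cdots f_1}=\tilde f_1\cdots\tilde f_n$, subadditivity of $\joliN$ under composition, and $\tilde\joliN\incl\voisq$, to obtain $\tilde\joliN(F)\incl\tilde\joliN(g_1)+\voisq(f_1)$.

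The substance is condition (3). Fix words $a,b$ on $X$ and $u,v$ on $\bar X$ with $F(a.u)=F(b.u)|_{\overline{\acco{0}}}$; I must deduce $F(a.v)=F(b.v)|_{\overline{\acco{0}}}$. I introduce the intermediate trajectories $\phi_k:=f_k\cdots f_1=h_{k+1}$, so that $\phi_0=\id$, $\phi_n=F$ and $g_k\phi_k=F$. \emph{Peeling from the top}: since $F(a.u)$ and $F(b.u)$ differ at most at $0$, applying $g_k^{-1}$ shows that $\phi_k(a.u)$ and $\phi_k(b.u)$ agree off $\tilde\joliN(g_k)$; call this $(\star_k)$, and observe it holds for \emph{every} $k$ simultaneously. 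I then prove the $v$-analogue $(\star_k^v)$, namely ``$\phi_k(a.v)=\phi_k(b.v)$ off $\tilde\joliN(g_k)$'', by induction on $k$. The base $(\star_0^v)$ is immediate, since $(\star_0)$ merely asserts $a=b$ off $\tilde\joliN(F)$ on the coordinates of $X$, a statement independent of the tail. For the step $(\star_k^v)\Rightarrow(\star_{k+1}^v)$ I apply the semilocalizability of $f_{k+1}$ on the block $H:=\tilde\joliN(g_{k+1})+\voisq(f_{k+1})$, i.e.\ \ppte{H}{\tilde\joliN(g_{k+1})}{f_{k+1}} (Lemma~\ref{xyz}). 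Two geometric facts make the level-$k$ configurations fit the exact shape demanded by condition (3): first, the restriction of $\phi_k(a.\cdot)$ to $H$ is frozen when $u$ is swapped for $v$, because $H+\joliN(h_{k+1})$ is precisely the $(k+1)$-th term of $X$, hence contained in $X$, where $a.u$ and $a.v$ coincide; second, along $\bar H$ the $a$- and $b$-trajectories coincide for each fixed tail, because $\tilde\joliN(g_k)\incl\tilde\joliN(g_{k+1})+\tilde\joliN(f_{k+1})\incl H$, so both $(\star_k)$ and $(\star_k^v)$ confine their discrepancies inside $H$. Thus $\phi_k(a.u),\phi_k(b.u)$ share a common tail on $\bar H$ and share with $\phi_k(a.v),\phi_k(b.v)$ a head that is invariant under $u\leftrightarrow v$; feeding $(\star_{k+1})$ into condition (3) for $f_{k+1}$ then transports it to $(\star_{k+1}^v)$. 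Taking $k=n$ yields the goal.

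The delicate point, and the one I expect to require the most care, is exactly this matching at each layer. The naive forward transport of ``agreeing off a set'' through $f_{k+1}$ via its classical neighborhood enlarges the discrepancy in the wrong direction; the remedy is to present the configurations to the two-sided condition (3) of Proposition~\ref{old_definition}, which forces both trajectories to display a common tail on $\bar H$ and a head frozen under $u\leftrightarrow v$. Pinning down the two inclusions $\tilde\joliN(g_k)\incl H$ (subadditivity of $\tilde\joliN$ under composition) and $H+\joliN(h_{k+1})\incl X$ (the very definition of the $(k+1)$-th term) is what makes the union-shaped bound emerge, and keeping the bookkeeping of all these restrictions straight is the real work.
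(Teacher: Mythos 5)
Your proof is correct and follows essentially the same route as the paper's: your $(\star_k)$ and $(\star_k^v)$ are exactly the two clauses of the paper's induction hypothesis \rec{k}, your block $H$ is the paper's $\joliK_{k+1}$, and the inductive step uses the same two inclusions ($\tilde\joliN(g_k)\incl H$ and $H+\joliN(h_{k+1})\incl X$) followed by the same appeal to Lemma~\ref{xyz} and condition~(3) of Proposition~\ref{old_definition}. The only (harmless) difference is that you verify semicausality of $F$ and $\tilde F$ explicitly, which the paper leaves implicit.
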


This formula could seem at first glance not to be self-dual, and therefore obviously suboptimal, but there is more to the duality than just putting and removing tildes.  Since $\widetilde{fg}=\tilde g \tilde f$, we have $\widetilde{\voisq}(f_n\cdots f_1)= \voisq(\tilde f_1\cdots\tilde f_n)$; it is from here straightforward to check that the formula is indeed self-dual.

\begin{proof}
Let $\joliV=\bigcup\limits_{k=1}^n \pa{\tilde \joliN(f_n\cdots f_{k+1}) + \voisq(f_k) + \joliN(f_{k-1} \cdots f_1)}$; we have to prove \ppte{\joliV}{\acco{0}}{f}.  Let $a,b$ be words on $\joliV$, $u,v$ words on $\bar\joliV$, and assume $f_n\cdots  f_1(a.u)= f_n\cdots  f_1(b.u)|_{\overline{\acco{0}}}$.

For $k\in\interval{0}{n}$, let $\joliC_k=\tilde\joliN(f_n\cdots f_{k+1})$;  for $k\in\interval{1}{n}$, let $\joliK_k=\joliC_k+\voisq(f_k)$ and $\joliD_k=\joliN(f_{k-1}\cdots f_{1})$.  For $k\in\interval{1}{n}$, let $\joliV_k=\joliK_k+ \joliD_k$; by definition, $\joliV=\bigcup\limits_{k=1}^n \joliV_k$.

We will prove by induction the following hypothesis \rec{k} for $k\in \interval{0}{n}$:

\begin{itemize}
\item $f_{k}\cdots f_1(a.u)=f_{k}\cdots f_1(b.u)|_{\overline{\joliC_k}}$ and
\item $f_{k}\cdots f_1(a.v)=f_{k}\cdots f_1(b.v)|_{\overline{\joliC_k}}$.
\end{itemize}

Since we already know $a.u=b.u|_{\overline{\joliC_0}}$, it follows immediately $a.v=b.v|_{\overline{\joliC_0}}$, so \rec{0} is true.

Suppose \rec{k} for some $k\in\interval{0}{n-1}$.  Let $a'=f_{k}\cdots  f_1(a.u)|_{\joliK_{k+1}}$ and $b'=f_{k}\cdots  f_1(b.u)|_{\joliK_{k+1}}$; since $\joliK_{k+1}+\joliD_{k+1}\incl \joliV$, $a'$ and $b'$ are respectively equal to $f_{k}\cdots  f_1(a.v)|_{\joliK_{k+1}}$ and $f_{k}\cdots  f_1(b.v)|_{\joliK_{k+1}}$.  Let us define $u'=f_{k}\cdots  f_1(a.u)|_{\overline{\joliK_{k+1}}}$ and $v'=f_{k}\cdots  f_1(a.v)|_{\overline{\joliK_{k+1}}}$.  We have 

$$\joliC_{k}=\tilde\joliN(f_n\cdots f_{k+1})
\incl \tilde\joliN(f_n\cdots f_{k+2}) + \tilde\joliN(f_{k+1})\incl \joliC_{k+1}+ \voisq(f_{k+1}) = \joliK_{k+1}.$$

We can therefore deduce from \rec{k} that $u'$ and $v'$ are respectively equal to $f_{k}\cdots  f_1(b.u)|_{\overline{\joliK_{k+1}}}$ and $f_{k}\cdots  f_1(b.v)|_{\overline{\joliK_{k+1}}}$.  By definition of $\joliC_{k+1}$, since $f_n\cdots  f_1(a.u)= f_n\cdots  f_1(b.u)|_{\overline{\acco{0}}}$, we have $f_k\cdots  f_1(a.u)= f_k\cdots  f_1(b.u)|_{\overline{\joliC_k}}$, which is the first point of \rec{k+1}.  Since $\joliK_{k+1}=\joliC_{k+1}+\voisq(f_{k+1})$, according to Lemma~\ref{xyz}, we have \ppte{\joliK_{k+1}}{\joliC_{k+1}}{f_{k+1}}.  We therefore deduce the second point of \rec{k+1}.

So in the end we get \rec{n}, which concludes the proof because $\joliC_n=\acco{0}$.
\end{proof}

\begin{corollary}\label{corollaire_itere}
Suppose $\joliN(f)\incl\interval{-\alpha}{\beta}$ and $\tilde\joliN(f)\incl\interval{-\gamma}{\delta}$. Then $\voisq(f^k)\incl \interval{-(k+1)\max\pa{\alpha,\gamma}-\min\pa{\beta,\delta}}{\pa{k+1}\max\pa{\beta,\delta}+\min\pa{\alpha,\gamma}}$.
\end{corollary}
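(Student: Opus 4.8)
The plan is to feed Proposition~\ref{mainthm} the constant sequence $f_1=\cdots=f_k=f$ and then bound each of the three factors appearing in the resulting union. Writing $j$ for the summation index, the proposition gives
\[
\voisq(f^k)\incl \bigcup_{j=1}^{k}\pa{\tilde\joliN(f^{k-j})+\voisq(f)+\joliN(f^{j-1})},
\]
so it suffices to control $\joliN(f^{j-1})$, $\tilde\joliN(f^{k-j})$ and $\voisq(f)$ separately and then add the intervals term by term.

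For the two outer factors I would use the standard subadditivity of the classical neighborhood under composition, $\joliN(gh)\incl\joliN(g)+\joliN(h)$, which iterates to $\joliN(f^m)\incl\interval{-m\alpha}{m\beta}$. Since $\widetilde{f^m}=(\tilde f)^m$, the same argument applied to $\tilde f$ yields $\tilde\joliN(f^m)=\joliN((\tilde f)^m)\incl\interval{-m\gamma}{m\delta}$. For the middle factor I would invoke Corollary~\ref{corollaire_ridicule}: substituting the interval hypotheses into the two Minkowski expressions $\joliN-\joliN+\tilde\joliN\incl\interval{-(\alpha+\beta+\gamma)}{\alpha+\beta+\delta}$ and $\tilde\joliN-\tilde\joliN+\joliN\incl\interval{-(\alpha+\gamma+\delta)}{\beta+\gamma+\delta}$ and intersecting gives
\[
\voisq(f)\incl\interval{-(\alpha+\gamma)-\min(\beta,\delta)}{(\beta+\delta)+\min(\alpha,\gamma)}.
\]

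Adding the three intervals, the $j$-th set of the union lands in $\interval{-L_j}{R_j}$ with $L_j=(k-j+1)\gamma+j\alpha+\min(\beta,\delta)$ and $R_j=(k-j+1)\delta+j\beta+\min(\alpha,\gamma)$, and the whole union is then contained in $\interval{-\max_j L_j}{\max_j R_j}$. The key observation is that $L_j$ and $R_j$ are \emph{affine} in $j$: writing $L_j=(k+1)\gamma+j(\alpha-\gamma)+\min(\beta,\delta)$, its maximum over $j\in\interval{1}{k}$ is attained at $j=k$ when $\alpha\geq\gamma$ and at $j=1$ otherwise, and in both cases one checks $\max_j L_j\leq (k+1)\max(\alpha,\gamma)+\min(\beta,\delta)$, with equality exactly when $\alpha=\gamma$. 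The symmetric computation gives $\max_j R_j\leq(k+1)\max(\beta,\delta)+\min(\alpha,\gamma)$, which together yield the announced interval. I expect the main obstacle to be purely bookkeeping: getting the case analysis on the signs of $\alpha-\gamma$ and $\beta-\delta$ right so that the $j$-dependent terms collapse to the clean $(k+1)$ coefficient — this is precisely where the $\max$'s and $\min$'s of the statement originate.
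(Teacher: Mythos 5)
Your derivation is correct and is evidently the intended one: the paper states this corollary without proof, and the only natural route is exactly yours — specialize Proposition~\ref{mainthm} to $f_1=\cdots=f_k=f$, bound the outer terms by subadditivity of $\joliN$ and $\tilde\joliN$, bound the middle term by Corollary~\ref{corollaire_ridicule} (which is indeed needed to produce the $\min(\beta,\delta)$ and $\min(\alpha,\gamma)$ terms), and maximize the affine functions $L_j,R_j$ at the endpoints $j\in\acco{1,k}$. All the arithmetic checks out.
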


For $X\incl\R$, let $X^*$ be its convex hull and for $\lambda\in\R$, $\lambda X=\acco{\lambda x \mid x\in X}$.  We get, for any reversible CA, the asymptotic relation $\lim\limits_{k\to +\infty}\frac{1}{k}\voisq(f^k)^*\incl \joliN(f) ^*\cup\tilde\joliN(f)^*$.  Let us assume we are in the case $\lim\limits_{k\to +\infty}\frac{1}{k}\joliN(f^k)^*= \joliN(f) ^*$ and  $\lim\limits_{k\to +\infty}\frac{1}{k}\tilde\joliN(f^k)^*= \tilde \joliN(f) ^*\cup\tilde\joliN(f)^*$.  Then what this means informally is that condition~(3) in Proposition~\ref{old_definition} applied to $f^k$ becomes less restrictive as $k$ grows, and fades at the limit.

It is interesting to note the relation with Kari's constructions in \cite{Kari_blocks} and \cite{Kari_circuit_depth}.  We will briefly discuss the latter; it is of course stated in dimension~2, but that is not an obstacle to comparison, as the same construction can be made in dimension 1, or our analysis generalized to dimension~2 (cf. section~\ref{sec_general}).  Let us place ourselves in dimension~1.  Let $f$ be a CA whose neighborhood and dual neighborhood are both included in $\interval{-1}{1}$.  In this case, Corollary~\ref{corollaire_itere} implies $\voisq(f^k)\incl\interval{-(k+2)}{k+2}$.  Kari proves that there is an embedding $\varphi$ and a CA $g$ such that $f=\varphi g \varphi^{-1}$, where $g$ fulfills by construction $\voisq(g)\incl\interval{-1}{1}$.  Kari's construction therefore contains an asymptotically optimal bound on $\voisq(f^k)$.

\begin{corollary}\label{indecomposable}
If the neighborhoods and dual neighborhoods of $f$ and $g$ are included in $\interval{-n}{n}$, then $\voisq(fg)\incl\interval{-4n}{4n}$.
\end{corollary}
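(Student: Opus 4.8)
The plan is to specialize Proposition~\ref{mainthm} to the case $n=2$ and then control the two resulting terms using the individual bound of Proposition~\ref{borne_individuelle} together with elementary Minkowski-sum arithmetic on integer intervals.

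First I would set $f_2=f$ and $f_1=g$, so that $fg=f_2f_1$, and read off the bound given by Proposition~\ref{mainthm}. For $n=2$ the union has only two terms. In the $k=1$ term the factor $f_{k-1}\cdots f_1$ is the empty composition, i.e.\ the identity CA, whose neighborhood is $\acco{0}$; in the $k=2$ term the factor $f_n\cdots f_{k+1}$ is likewise the identity, so that $\tilde\joliN(\id)=\acco{0}$. Hence the proposition collapses to
$$\voisq(fg)\incl\pa{\tilde\joliN(f)+\voisq(g)}\cup\pa{\voisq(f)+\joliN(g)}.$$

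Next I would bound $\voisq(f)$ and $\voisq(g)$ separately. By Proposition~\ref{borne_individuelle}, $\voisq(g)\incl\joliN(g)-\joliN(g)+\tilde\joliN(g)$, and since $\joliN(g)$ and $\tilde\joliN(g)$ both lie in $\interval{-n}{n}$, the computation $\interval{-n}{n}-\interval{-n}{n}+\interval{-n}{n}=\interval{-3n}{3n}$ gives $\voisq(g)\incl\interval{-3n}{3n}$, and symmetrically $\voisq(f)\incl\interval{-3n}{3n}$. Substituting into the two terms above yields $\tilde\joliN(f)+\voisq(g)\incl\interval{-n}{n}+\interval{-3n}{3n}=\interval{-4n}{4n}$ and, identically, $\voisq(f)+\joliN(g)\incl\interval{-4n}{4n}$; taking the union finishes the argument.

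There is no genuine obstacle here, as the statement is a direct corollary of the two main results: the only points requiring care are the bookkeeping of the two boundary factors that degenerate to the identity CA (so that their classic and dual neighborhoods reduce to $\acco{0}$) and the routine tracking of interval endpoints through the Minkowski sums. I note in passing that one could instead invoke the sharper Corollary~\ref{corollaire_ridicule} to bound $\voisq(f)$ and $\voisq(g)$, but the weaker estimate of Proposition~\ref{borne_individuelle} already suffices to reach $\interval{-4n}{4n}$.
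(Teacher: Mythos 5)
Your proposal is correct and is exactly the intended derivation: the paper states this as an immediate corollary of Proposition~\ref{mainthm} and Proposition~\ref{borne_individuelle}, and your specialization to two factors, the handling of the degenerate identity terms, and the interval arithmetic $\interval{-n}{n}-\interval{-n}{n}+\interval{-n}{n}=\interval{-3n}{3n}$ followed by one more sum to reach $\interval{-4n}{4n}$ are all as expected. No issues.
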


The contraposition is actually more interesting.  Consider $h$, whose neighborhood and dual neighborhood are both included in $\interval{-n}{n}$; its \block\ neighborhood has to be contained in $3\interval{-n}{n}$.  It seems perfectly reasonable to assume that $h$ could be a composition of two more elementary reversible cellular automata $f$ and $g$ having strictly smaller neighborhoods, containing $\frac{1}{2}\interval{-n}{n}$ but close to it.  Actually, if no restriction is imposed on the behaviour of $f^{-1}$ and $g^{-1}$, maybe even allowing $f$ and $g$ to be nonreversible, it is certainly possible to decompose $h$ in such a way by increasing the size of the alphabet.  However, if the dual neighborhoods of $f$ and $g$ are also required to be close to $\frac{1}{2}\interval{-n}{n}$, then such a decomposition will not be possible if $\voisq(h)$ is too large.  For instance, if $\voisq(h)$ is not contained in $\frac{5}{2}\interval{n}{n}$, then $\joliN(f)$, $\tilde\joliN(f)$, $\joliN(g)$ and $\tilde\joliN(g)$ cannot all be included in $\frac{5}{8}\interval{-n}{n}$.  In this sense, $h$ can be considered ``elementary''.

\section{Remarks}

We gather in this section several unrelated observations about the \block\ neighborhood.

\subsection{Subtraction Automata}

Suppose $\Sigma$ can be provided with a binary operation $\cdot - \cdot$ such that:

\begin{itemize}
\item there exists an element of $\Sigma$ denoted $0$ such that $x=y$ is equivalent to $x-y=0$;
\item $f$ is an endomorphism of $(\Sigma^\Z,-)$, where $-$ is defined component-wise on $\Sigma^\Z$.
\end{itemize}

We say in this case $f$ admits a subtraction.  For instance, linear automata as defined in \cite{fractal} admit subtractions.

\begin{proposition}
Automata with subtractions have minimal \block\ neighborhoods.
In other words, for any automaton $f$ admitting a subtraction, $\voisq(f)=\joliN(f)\cup\tilde\joliN(f)$.
\end{proposition}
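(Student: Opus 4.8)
The plan is to prove the proposition via the combinatorial characterization of semilocalizability from Proposition~\ref{old_definition}. Since the inclusion $\voisq(f)\supseteq\joliN(f)\cup\tilde\joliN(f)$ always holds, only the reverse inclusion needs work, and by definition of $\voisq$ it is enough to establish \ppte{\joliN(f)\cup\tilde\joliN(f)}{\acco{0}}{f}. I would therefore set $X=\joliN(f)\cup\tilde\joliN(f)$ and $Y=\acco{0}$, and check the three conditions of Proposition~\ref{old_definition} for $f$ viewed as a map $\Sigma^X\times\Sigma^{\bar X}\to\Sigma^{Y}\times\Sigma^{\bar Y}$.

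Conditions~(1) and~(2) are free. By the characterization of semicausality for CAs recalled above, semicausality of $f$ means $X\supseteq\acco{0}+\joliN(f)=\joliN(f)$ and semicausality of $\tilde f$ means $X\supseteq\acco{0}+\tilde\joliN(f)=\tilde\joliN(f)$; both hold by the very choice of $X$.

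The real content is condition~(3), and this is where I would use the subtraction. Let $a,a'$ be words on $X$ and $b,b'$ words on $\bar X$. Since $f$ is an endomorphism of $(\Sigma^\Z,-)$, we have $f(a.b)-f(a'.b)=f\big((a.b)-(a'.b)\big)$. The crucial point is that $(a.b)-(a'.b)$ equals $a-a'$ on $X$ and $b-b=0$ on $\bar X$, and that $(a.b')-(a'.b')$ equals exactly the same configuration, because $b'-b'=0$ too. Hence $f(a.b)-f(a'.b)$ and $f(a.b')-f(a'.b')$ are one and the same configuration. Using that $x=y$ is equivalent to $x-y=0$, the hypothesis of~(3), namely $f(a.b)=f(a'.b)|_{\overline{\acco{0}}}$, asserts precisely that this common configuration vanishes on $\overline{\acco{0}}$; the conclusion $f(a.b')=f(a'.b')|_{\overline{\acco{0}}}$ then follows immediately, establishing~(3).

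I do not expect a genuine obstacle here: the subtraction structure collapses exactly the dependence on the ``witness'' $b$ that condition~(3) forbids, since a difference of two $f$-images cannot see the common part carried on $\bar X$. The only point requiring care will be the bookkeeping of restrictions and concatenations, together with the observation that the two relevant differences coincide as full configurations \emph{before} one restricts to $\overline{\acco{0}}$.
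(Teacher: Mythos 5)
Your proposal is correct and follows essentially the same route as the paper: the paper's proof is exactly the computation $f(a.v)-f(b.v)=f\big((a-b).0\big)=f(a.u)-f(b.u)$, showing that the difference of images does not depend on the word carried on the complement, which is condition~(3) of Proposition~\ref{old_definition} (and the paper even notes it holds for an arbitrary subset $A$, not just $X=\joliN(f)\cup\tilde\joliN(f)$). Your additional bookkeeping of conditions~(1) and~(2) and of the trivial inclusion $\voisq\supseteq\joliN\cup\tilde\joliN$ is left implicit in the paper but matches its intent.
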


\proof
Let $A$ be any subset of $\Z$, $a,b$ be words on $A$, $u,v$ words on $\bar A$, and suppose $f(a.u)=f(b.u)$.  Then $f(a.v)-f(b.v)=f(a.v-b.v)= f((a-b).0)=f(a.u-b.u)=f(a.u)-f(b.u)=0$.
\qed

\subsection{Generalization}\label{sec_general}

We can actually drop many properties of the CAs that are irrelevant to the notions developed in this article.  We don't need translation invariance.  We don't need the alphabet to be finite.  We don't need the neighborhoods to be finite.  We don't need the domain and range cell structures to be identical.  In this abstract setting, a ``reversible automaton'' is a bijection from $\prod\limits_{i\in I} X_i$ to $\prod\limits_{j\in J} Y_j$ and $\joliN(f)$ is a function from $\Pa(J)$ to $\Pa(I)$ which to $B\incl J$ associates the minimal subset $A$ of $I$ such that $f(x)|_B$ depends only on $x|_A$.  In general, a function $\alpha: \prod\limits_{i\in I} X_i\to \prod\limits_{j\in J} Y_j$ is a \emph{neighborhood scheme} if for all $Y$, $\alpha(Y)=\bigcup_{X\incl Y} \alpha(X)$; $\joliN(f)$ is of course one example of a neighborhood scheme.  The usual definition of the neighborhood in the case of a cellular automaton corresponds here to $\joliN(f)(\acco{0})$.
Any function $\alpha:\Pa(J)\to\Pa(I)$ has a transpose $\alpha^{\dagger}:\Pa(I)\to\Pa(J)$ defined by $\alpha^{\dagger}(A)$ being the largest subset $B$ of $J$ such that $\alpha(B)\incl A$.  We have indeed $(\alpha^\dagger)^\dagger=\alpha$, and for usual one-dimensional CAs, $^\dagger$ corresponds to $\joliN\mapsto -\joliN$.

There is not anymore any good notion of duality on automata, but $\tilde\joliN(f)$ can be defined as $\joliN^\dagger (f^{-1})$.  Of course the definition of $\voisq(f)$ cannot make any reference to $0$, instead $\voisq(f)(B)$ is now the smallest subset of $I$ fulfilling \ppte{A}{B}{f}.  The self-duality of $\voisq$ is of course still valid.  Lemmas \ref{welldefined} and \ref{intersection} state respectively that $\voisq$ is well-defined and that it is a neighborhood scheme.

Lemma~\ref{xyz} (used once at the end of the proof of Proposition~\ref{mainthm}) becomes ``\ppte{X}{Y}{f} is equivalent to $X\supseteq \bigcup\limits_{y\in Y}\voisq(f)(Y)$'', which is precisely the definition of $\voisq$; it can therefore be forgotten, as the triviality it is now.  Proposition~\ref{borne_individuelle} becomes $\voisq\incl \joliN \circ \joliN^\dagger\circ \tilde\joliN$, Corollary~\ref{corollaire_ridicule} changes accordingly, and Proposition~\ref{mainthm} remains true when ``$+$'' is substituted with ``$\circ$''.  It follows that indecomposability results such as Corollary~\ref{indecomposable} are extremely robust:  they cannot be overcome by increasing the size of the alphabet or relaxing the translational invariance.  It also shows of course the limitations of this method, namely that it is utterly unable to exploit these parameters.

\subsection{Optimality}\label{sec_optimal}

The bounds presented in Corollary~\ref{corollaire_ridicule} and Proposition~\ref{mainthm} seem  peculiar enough as to be suspect of non-optimality.  However, we have been unable to come up with a better approximation, and would rather tend to think that they cannot be improved.  We will concentrate on Corollary~\ref{corollaire_ridicule} alone, whose optimality is conjectured in the following statement.

\begin{conjecture}\label{conjecture}
For any subsets $X,Y$ and $Z$ of $\Z$ such that $X\cup Y\incl Z\incl (X-X+Y)\cap (Y-Y+X)$, if there exists a CA $f$ such that $\joliN(f)=X$ and $\tilde\joliN(f)=Y$, then there exists a CA $g$ such that $\joliN(g)=X$, $\tilde\joliN(g)=Y$ and $\voisq(g)=Z$.
\end{conjecture}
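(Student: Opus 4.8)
The plan is to prove the conjecture by an explicit construction, taking the combinatorial reformulation of Proposition~\ref{old_definition} as the precise target. Denoting by $S$ the splitting set in the property \ppte{S}{\acco{0}}{g}, I must arrange four things: (i) $\joliN(g)=X$ and $\tilde\joliN(g)=Y$; (ii) condition~(3) of Proposition~\ref{old_definition} holds at $S=Z$, so that \ppte{Z}{\acco{0}}{g} holds and hence $\voisq(g)\incl Z$; and (iii) for every $z\in Z\setminus(X\cup Y)$, condition~(3) fails at $S=Z\setminus\acco{z}$, which forces $z\in\voisq(g)$ and gives the reverse inclusion. The points of $X\cup Y$ lie in $\voisq(g)$ automatically, since conditions~(1) and~(2) — semicausality of $g$ and of $\tilde g$ — amount exactly to $S\supseteq\joliN(g)\cup\tilde\joliN(g)=X\cup Y$. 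So the real content is to realise the ``extra'' set $W:=Z\setminus(X\cup Y)$ as a family of irremovable correlations, one responsible for each $w\in W$, without disturbing the one-step forward and backward neighborhoods.

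First I would isolate the base case $Z=X\cup Y$: produce a CA with the prescribed $\joliN,\tilde\joliN$ and minimal block neighborhood. Subtraction automata already satisfy $\voisq=\joliN\cup\tilde\joliN$, so here the task is to modify the given $f$ — for instance by passing to a larger alphabet carrying an additive structure on the relevant coordinates — into a $g_0$ with unchanged $\joliN,\tilde\joliN$ but with condition~(3) rendered trivially true at $S=X\cup Y$. I would then add the points of $W$ one at a time, by induction on $|W|$, noting that every intermediate set still lies between $X\cup Y$ and the bound of Corollary~\ref{corollaire_ridicule}. For each $w\in W$, the hypothesis $w\in(X-X+Y)\cap(Y-Y+X)$ furnishes witnessing decompositions $w=x_1-x_2+y$ and $w=y_1-y_2+x'$; these are precisely the index relations exploited in the proof of Proposition~\ref{borne_individuelle}, and they locate where a genuine two-step dependency of the output at $0$ on the input at $w$ can be planted. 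The gadget would ride on an auxiliary alphabet component and be built so that the output at $0$ encodes a quantity recoverable only after reading the input at $w$ jointly with the inputs witnessing its decomposition, making condition~(3) fail at $S=Z\setminus\acco{w}$ while it is restored once $w$ is included.

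The main obstacle — and the reason this is stated as a conjecture rather than a theorem — is the simultaneous control of all four requirements. Any correlation that pushes $w$ into $\voisq(g)$ tends to enlarge $\joliN(g)$ or $\tilde\joliN(g)$, so the gadget must be invisible to both one-step neighborhoods and manifest only at the level of condition~(3); this is far more rigid than merely increasing the quantum neighborhood, which is comparatively easy. Worse, the gadgets for distinct points of $W$ may interfere: a correlation introduced for $w$ could make condition~(3) fail at some $S=Z\setminus\acco{z}$ with $z\notin Z$, overshooting the target, or could repair a failure intended for another $w'$. Guaranteeing that the failures of condition~(3) occur for exactly the sets $Z\setminus\acco{z}$ with $z\in Z$, and for no larger set, is the crux; I expect it to demand careful bookkeeping of which tuples $(a,a',b,b')$ witness each failure, and perhaps a proof that the witnessing decompositions across all of $W$ can be chosen compatibly. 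Should the direct construction prove unmanageable, a fallback is to first establish a monotonicity or composition principle for the block neighborhoods realisable at a fixed $(X,Y)$ and reduce the general case to one-point extensions — but even then the realisability of a single extra point, with the one-step neighborhoods held fixed, remains the essential difficulty.
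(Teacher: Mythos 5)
You are trying to prove a statement that the paper itself leaves open: Conjecture~\ref{conjecture} is not proved anywhere in the paper, which establishes only the weaker Proposition~\ref{weak_prop}, i.e.\ the case $Z\incl\acco{2y-x\mid x,y\in X\cap Y}$. Your proposal is likewise not a proof, as you concede in your final paragraph. Its entire mathematical content is concentrated in the one-point gadget you describe but never construct: a CA $g_w$ with $\joliN(g_w)\incl X$, $\tilde\joliN(g_w)\incl Y$ and $w\in\voisq(g_w)\incl Z$ for an \emph{arbitrary} $w\in(X-X+Y)\cap(Y-Y+X)$. The only gadget known to the paper is the Toffoli automaton $T$ together with its dilates and shifts $\sigma^xT_{y-x}$, which realise $\joliN=\tilde\joliN=\acco{x;y}$ and $\voisq=\acco{x;y;2y-x}$ and therefore reach exactly the points $z=2y-x$ with $x,y\in X\cap Y$ --- that is precisely where the known results stop. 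Observing that the hypothesis ``furnishes witnessing decompositions $w=x_1-x_2+y=y_1-y_2+x'$ which locate where a dependency can be planted'' names the difficulty; it does not resolve it, and no candidate local rule is offered.

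On the architecture, your plan essentially reproduces the paper's strategy for the weak version, and two of the obstacles you worry about are not real obstacles. Since the three neighborhoods $\joliN$, $\tilde\joliN$ and $\voisq$ of a direct sum (product alphabet, componentwise action) are the unions of those of the summands, it suffices to sum the given $f$ with one gadget per $z$ satisfying only the \emph{inclusions} $\joliN(g_z)\incl X$, $\tilde\joliN(g_z)\incl Y$, $z\in\voisq(g_z)\incl Z$; the equalities $\joliN(g)=X$ and $\tilde\joliN(g)=Y$ come for free from the summand $f$, and distinct gadgets cannot interfere, so no induction on $|W|$ and no compatibility bookkeeping among witnessing tuples is needed. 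Your ``base case'' $Z=X\cup Y$ is, however, a genuine concern that the direct-sum trick does not dissolve: for the sum to have $\voisq\incl Z$ one needs the summand realising $\joliN=X$, $\tilde\joliN=Y$ to satisfy $\voisq\incl Z$ as well, and producing such an automaton is itself an instance of the conjecture. So the proposal correctly maps the terrain but fills in neither of the two places where an actual construction is required: the minimal-$\voisq$ base automaton and, above all, the single-point gadget for a general $z$ outside $\acco{2y-x\mid x,y\in X\cap Y}$.
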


This section will be devoted to proving the following weaker version:

\begin{prop}\label{weak_prop}
Conjecture~\ref{conjecture} is true when $Z\incl \acco{2y-x\mid x,y\in X\cap Y}$.
In particular it is true if $X$ and $Y$ are equal intervals.
\end{prop}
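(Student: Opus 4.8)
The plan is to realize $Z$ by building $g$ as a parallel composition of independent automata running on separate tracks, so that the global cell alphabet is a product and each component automaton acts on its own track. For such a parallel composition of $g_1$ and $g_2$ one reads off immediately $\joliN(g_1g_2\text{-track})=\joliN(g_1)\cup\joliN(g_2)$ and $\tilde\joliN=\tilde\joliN(g_1)\cup\tilde\joliN(g_2)$, and, by evaluating the three conditions of Proposition~\ref{old_definition} on each track separately and then applying Lemma~\ref{xyz}, also $\voisq=\voisq(g_1)\cup\voisq(g_2)$. Thus it suffices to produce (i) a \emph{base} automaton $f_0$ with $\joliN(f_0)=X$, $\tilde\joliN(f_0)=Y$ and the minimal block neighborhood $\voisq(f_0)=X\cup Y$, and (ii) for each point $z\in Z\setminus(X\cup Y)$ a \emph{gadget} $h_z$ with $\joliN(h_z)\incl X$, $\tilde\joliN(h_z)\incl Y$, $\voisq(h_z)\incl Z$ and $z\in\voisq(h_z)$. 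Composing $f_0$ in parallel with all the $h_z$ then gives $\joliN(g)=X$, $\tilde\joliN(g)=Y$, and $\voisq(g)=(X\cup Y)\cup\bigcup_z\voisq(h_z)=Z$.

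Before constructing the gadgets I would unfold ``$\voisq(g)=Z$'' into the two inclusions it abbreviates. The inclusion $\voisq(g)\incl Z$ is exactly the property \ppte{Z}{\acco{0}}{g}, i.e.\ conditions (1)--(3) of Proposition~\ref{old_definition} at $Z$; conditions (1)--(2) only demand $Z\supseteq X\cup Y$, which holds by hypothesis, so the content sits entirely in condition~(3). For the reverse inclusion $Z\incl\voisq(g)$, the points $z\in X\cup Y$ are free, since $\voisq(g)\supseteq\joliN(g)\cup\tilde\joliN(g)=X\cup Y$ always, while each $z\in Z\setminus(X\cup Y)$ requires a \emph{witness} that condition~(3) fails for the decomposition at $Z\setminus\acco{z}$ (it cannot be conditions (1)--(2) that fail there, as those still only need $X\cup Y$). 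Producing such a witness is precisely the job of the gadget $h_z$.

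For the gadget I would exploit the hypothesis $z=2y_0-x_0$ with $x_0,y_0\in X\cap Y$ to build a small reversible block automaton on a constant-size track, namely a short composition of block permutations (hence reversible by construction, as in the passage from Figure~\ref{figure_blocks2} to Figure~\ref{figure_blocks3}), implementing a controlled dependency that chains a forward step governed by $\joliN$ with a reflecting step governed by $\tilde\joliN$, so that the value at cell $z$ is exactly the information needed to localise the update of cell~$0$. Membership $x_0,y_0\in X\cap Y$ is what lets both legs be taken within $X$ and within $Y$ respectively, keeping $\joliN(h_z)\incl X$ and $\tilde\joliN(h_z)\incl Y$; and the bound $\voisq(h_z)\incl Z$ should follow from inspecting the block geometry together with Proposition~\ref{borne_individuelle}. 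The gadget must be genuinely nonlinear (a Toffoli-type rule), since any automaton admitting a subtraction has $\voisq=\joliN\cup\tilde\joliN$ and therefore could never push a point beyond $X\cup Y$.

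The hard part is twofold. First, the lower bound and the upper bound for the gadget pull against each other: I must exhibit central configurations $a,a'$ and an environment whose only relevant coordinate is $z$ witnessing the failure of condition~(3) at $Z\setminus\acco{z}$, while at the same time guaranteeing that the gadget creates \emph{no spurious} point, i.e.\ $\voisq(h_z)\incl Z$; this forces tight control of the gadget's support. Second, the base case $\voisq(f_0)=X\cup Y$ is itself the instance $Z=X\cup Y$ of Conjecture~\ref{conjecture}, and for arbitrary realizable $(X,Y)$ I do not see a uniform minimal realization. When $X$ and $Y$ are equal intervals, however, a symmetric reversible linear (subtraction) block automaton with both neighborhoods equal to the interval supplies $f_0$ directly, and there the extra constraint $Z\incl\acco{2y-x\mid x,y\in X\cap Y}$ is implied by the standing hypothesis $Z\incl(X-X+Y)\cap(Y-Y+X)$ because the two sets coincide; this is exactly why the ``in particular'' clause for equal intervals drops out of the same construction.
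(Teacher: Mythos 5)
Your overall architecture --- a direct sum of parallel tracks consisting of a base automaton plus one gadget per point $z=2y-x$, the gadgets necessarily nonlinear because subtraction automata have minimal block neighborhood --- is exactly the paper's. The genuine gap is that the gadget is never constructed, and that construction is where essentially all of the content of the proposition lives. The paper exhibits it explicitly: the Toffoli automaton on $\Sigma=(\Z/2\Z)^2$ with $T(v)_0=(v_0^2+v_0^1v_1^1,\,v_1^1)$ and $T^{-1}(v)_0=(v_{-1}^2,\,v_0^1+v_{-1}^2v_0^2)$, so that $\joliN(T)=\tilde\joliN(T)=\acco{0;1}$; it then pins down $\voisq(T)=\interval{0}{2}$ by proving \ppte{\N}{\N}{T} (which, combined with Corollary~\ref{corollaire_ridicule}, gives the upper bound $\voisq(T)\incl\interval{0}{2}$) and by exhibiting explicit words witnessing the failure of \ppte{\acco{0;1}}{\acco{0}}{T} (the lower bound). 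Dilating cells by a factor $l$ gives $T_l$ with $\voisq(T_l)=\acco{0;l;2l}$, and $g_z=\sigma^x T_{y-x}$ with $x,y\in X\cap Y$ and $z=2y-x$ realizes $\acco{x;y;z}$. Your description of the gadget (``a controlled dependency chaining a forward step with a reflecting step'') is a specification of what must be built, not a construction, and the tension you yourself flag --- producing a witness at $z$ while creating no spurious points --- is precisely the pair of hand verifications that must be carried out on a concrete rule. Without the rule, the core of the proof is missing.

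On the base automaton you also demand more than the paper does: the paper direct-sums the gadgets with the hypothesized $f$ itself rather than with a minimal realization of $(X,Y)$, so the instance $Z=X\cup Y$ of the conjecture is never invoked. (Whether $\voisq(f)\incl Z$ for an arbitrary such $f$ is not discussed there, so your worry does point at a step the paper treats as immediate; but insisting on $\voisq(f_0)=X\cup Y$ is not how the argument proceeds.) Your reading of the ``in particular'' clause is correct: for $X=Y$ an interval one has $\acco{2y-x\mid x,y\in X\cap Y}=(X-X+Y)\cap(Y-Y+X)$, and the paper's base in that case is a direct sum of shifts $\sigma^{-x}$ for $x\in X$, which has $\joliN=\tilde\joliN=\voisq=X$, matching the linear base you propose.
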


\begin{proof}
Given that there is by hypothesis a CA $f$ such that $\joliN(f)=X$ and $\tilde\joliN(f)=Y$, we only need to prove that for every $z\in Z$ there exists a CA $g_z$ such that $\joliN(g_z)\incl X$, $\tilde\joliN(g_z)\incl Y$ and $z\in\joliN(g_z)\incl Z$.  Then the proposition is proven by considering the direct sum of $f$ and all these $g_z$'s.

%In this case, there is always a CA $f$ such that $\joliN(f)=\tilde\joliN(f)=X$, the only requirement on $X$ being that it be finite.  We will in this case construct an example on an alphabet of size $2^{|X|}\times 4^{|Z\setminus X|}$.

%For each element $x\in X$, let $\sigma_x$ be a shift by $x$ on a alphabet of size two.  $\prod\limits_{x\in X}\sigma_x$ is a CA on an alphabet of size $2^{|X|}$ that has neighborhood and dual neighborhood $X$.

%Let now $z\in Z$, and $n,k,l\in X$ such that $z=n-k+l$.  Let $T_{n,k,l}$ be the automaton on the alphabet $\Sigma=\pa{\Z/2\Z}^2$ defined by $T_{n,k,l}(v)_0=\pa{v_0^2+v_0^1v_1^1,v_1^1}$.

 The Toffoli automaton presented in \cite{ANW1} (definition~12), defined by $\Sigma=\pa{\Z/2\Z}^2$ and $T(v)_0=\pa{v_0^2+v_0^1v_1^1,v_1^1}$, will serve as the basic constructing tool for $g_z$.  Its inverse is given by $T^{-1}(v)_0=\pa{v_{-1}^2,v_0^1+v_{-1}^2v_0^2}$, so we clearly have $\joliN(T)=\tilde\joliN(T)=\acco{0;1}$.  Let us prove $\voisq(T)=\interval{0}{2}$, by proving first that \ppte{\N}{\N}{T} is true, and then that \ppte{\acco{0;1}}{\acco{0}}{T} is false.  Let then $a,b$ be words on $\N$ and $u,v$ words on its complement, and suppose $T(a.u)=T(b.u)_{\bar\N}$.  In particular, $T(a.u)_{-1}^2= T(b.u)_{-1}^2$, which implies $a_0^1=b_0^1$, so we get immediately $T(a.v)=T(b.v)_{\bar\N}$, which proves \ppte{\N}{\N}{T}.  Consider now the words $a=(0,0)(0,0)$ and $b=(0,0)(1,1)$ on $\acco{0;1}$, and $u$, $v$ the words on its complement that are $(0,0)$ everywhere except in position $2$, where $u_2=(1,0)$.  We have $T(a.u)=T(b.u)_{\overline{\acco{0}}}$ but $T(a.v)_1=(0,0)$ while $T(b.v)_1=(1,0)$, therefore \ppte{\acco{0;1}}{\acco{0}}{T} is false.

This CA can be obviously expanded into an automaton $T_l$ such that $\joliN (T_l)=\tilde\joliN (T_l)=\acco{0;l}$ and $\voisq(T_l)=\acco{0;l;2l}$. 

More generally, for any nonempty intervals $X$ and $Z$ of $\Z$ such that $X\incl Z\incl X-X+X$, there is a CA $f$ such that $\joliN(f)=\tilde\joliN(f)=X$ and $\voisq(f)=Z$.  We can engineer such an $f$ by considering the direct sum of several CAs.  First, for each element $x\in X$, consider the shift by $-x$: the sum of all these shifts is a CA $g$ such that $\joliN(g)=\tilde\joliN(g)=\voisq(g)=X$.   Then, for each element $z\in Z$, choose $x$ and $y$ in $X\cap Y$ such that $z=2y-x$.  The automaton $g_z=\sigma^x T_{y-x}$, where $\sigma$ is the elementary shift to the left, is then such that $\joliN (T_l)=\tilde\joliN (T_l)=\acco{x;y}$ and $\voisq(T_l)=\acco{x;y;z}$, which concludes the proof.
\end{proof}

\section*{Conclusion}

Of course a lot of questions remain.  Are the upper bounds on the \block\ neighborhood given in this article optimal under all circumstances?  And is it possible to make these bounds more efficient by including as parameters the size of the alphabet and the requirement that the transformations be translation invariant?  This would probably require a whole different technique.

Something happened in this article that is increasingly common: after a theory grows a quantum extension (in this case QCAs join the family of CAs) and new tools and techniques are invented to study the quantum setup, they come back to the classical setup (semilocalizability comes to mind, and a lot of others are disguised as combinatorial properties) and bring various insights, simpler proofs and/or new results.

The \block\ neighborhood is nothing else than the quantum neighborhood.  It shows what had been grasped until then only intuitively: whereas CAs can be defined by their local transition functions, QCAs are intrisically block-structured.  In that sense, working on QCAs is a lot like working on CAs with a restricted bag of tools that includes only local permutations --- duplication or destruction of information are stricly forbidden.  It also means that, even staying in a purely classical framework, finding this kind of constructions is worthwhile and meaningful, even in the case where a result is already known to be attainable by another method.  Not only will the construction be nicer in a purely abstract way, because it will employ only elementary means: it will also have the benefit of being immediately transposable to the quantum case.

\section*{Acknowledgements}
  The authors would like to thank Jarkko Kari for showing them, to their amazement, how the neighborhood and the inverse neighborhood of CAs depend so little on each other, even in the iterated dynamics.  They also gratefully acknowledge the support of the Deutsche Forschungsgemeinschaft (Forschergruppe 635) and the EU (project QICS).

\bibliographystyle{alpha}
\bibliography{biblio}

\end{document}